\newcommand{\Endproof}{\hfill$\Box$\\}
\DeclareMathOperator{\dyck}{\textsc{Dyck}}
\newcommand{\sign}{\operatorname{sign}}
\newcommand{\argmax}{\operatorname{argmax}}
\newcommand{\true}{\textsc{True}}
\newcommand{\false}{\textsc{False}}
\newcommand{\ket}[1]{|#1\rangle}
\begin{document}

\title{Quantum algorithm for Dyck Language with Multiple Types of Brackets}
\author{Kamil Khadiev\inst{1,2}, Dmitry Kravchenko\inst{3}
} 

\institute{Kazan Federal University, Kazan, Russia, \and
Kazan E. K. Zavoisky Physical-Technical Institute, Kazan, Russia
\and
Center for Quantum Computer Science, Faculty of Computing, University of Latvia, Riga, Latvia\\
 \email{kamilhadi@gmail.com, kravchenko@gmail.com} }

\maketitle

\begin{abstract}
We consider the recognition problem of the Dyck Language generalized for multiple types of brackets. We provide an algorithm with quantum query complexity $O(\sqrt{n}(\log n)^{0.5k})$, where $n$ is the length of input and $k$ is the maximal nesting depth of brackets. Additionally, we show the lower bound for this problem which is $O(\sqrt{n}c^{k})$ for some constant $c$.

Interestingly, classical algorithms solving the Dyck Language for multiple types of brackets substantially differ form the algorithm solving the original Dyck language. At the same time, quantum algorithms for solving both kinds of the Dyck language are of similar nature and requirements.

\textbf{Keywords:} Dyck language, regular language, strings, quantum algorithms, query complexity
\end{abstract}

\section{Introduction}
\emph{Quantum computing} \cite{nc2010,a2017,aazksw2019part1} is one of the hot topics in computer science of the last decades.
There are many problems where quantum algorithms outperform the best known classical ones \cite{quantumzoo},
and one of the most important performance metrics in this regard is \emph{query complexity}.
We refer to \cite{a2017} for a nice survey on the quantum query complexity,
and to \cite{ks2019,kks2019,kksk2020,kms2019,gnbk2021} for the more recent progress.

Among other problems, quantum technologies can reduce the query complexity of recognizing many formal languages 
.
In this paper we consider a problem of recognizing whether an $n$-bit string belongs to one important \emph{regular language}.
Although this problem may seem too specific, we believe our approach to model a variety of computational tasks that can be described by regular languages.

Aaronson, Grier and Schaeffer \cite{ags2019} have recently shown that any regular language $L$ may have one of three possible quantum query complexities on inputs of length $n$: 
$\Theta(1)$ if the language can be decided by looking at $O(1)$ first or last symbols of a word;
$\tilde{\Theta}(\sqrt{n})$ if the best way to decide $L$ is Grover's search (for example, for the language consisting of all words containing at least one letter $a$);
$\Theta(n)$ for languages in which one can embed counting modulo some number $p$ which has quantum query complexity $\Theta(n)$ (for example, the binary $XOR$ function).

As shown in \cite{ags2019}, a regular language being of complexity $\tilde{O}(\sqrt{n})$ (which includes the first two cases of the list above) is equivalent to it being \emph{star-free}.
Star-free languages are defined as the languages which have regular expressions not containing the Kleene star (if it is allowed to use the complement operation).
Star-free languages are one of the most commonly studied subclasses of regular languages and there are many equivalent characterizations of them.

One class of the star-free languages mentioned in \cite{ags2019} is the Dyck languages (with one type of brackets and with constant height $k$).
To introduce a brief intuition about these languages, we may mention that words ``\texttt{[ ]}'' and ``\texttt{[~[~]~[~]~]}'' belong to a Dyck language, while words ``\texttt{]~[}'' and ``\texttt{[~]~]~[~[~]}'' do not.
Formally, Dyck language with height $k$ consists of all words with balanced number of brackets such that in no prefix the number of opening brackets exceeds the number of closing brackets by more than $k$;
we denote the problem of determining if an input of length $n$ belongs to this language by $\dyck_{k,n}$.
We note that such language of unbounded height (i.e. $k=\frac{n}{2}$) is a fundamental example of a context-free language that is not regular.
%

For this problem, Ambainis et al. \cite{abikkpssv2020} show that an exponential dependence of the complexity on $k$ is unavoidable. Namely, for the balanced brackets language
(i) there exists $c>1$ such that, for all $k\leq \log n$, the quantum query complexity is $\Omega(c^k \sqrt{n})$;
(ii) if $k=c\log n$ for an appropriate constant $c$, then the quantum query complexity is $\Omega(n^{1-\epsilon})$. 

Thus, the exponential dependence on $k$ is unavoidable and distinguishing sequences of balanced brackets of length $n$ and depth $\log n$
is almost as hard as distinguishing sequences of length $n$ and arbitrary depth.
Similar lower bounds have recently been independently proven by Buhrman et al. \cite{bps2019}.
Additionally, Ambainis et al. \cite{abikkpssv2020} describe an explicit algorithm for the decision problem $\dyck_{k,n}$ with $O\left(\sqrt{n}(\log{n})^{0.5k}\right)$ quantum queries.
The algorithm also works for arbitrary $k$, and outperforms the trivial upper bound of $n$ when $k=o\left(\frac{\log n}{\log\log n}\right)$. 

This work generalizes $\dyck_{k,n}$ to the case of multiple types of brackets.
For example, such languages contain words like ``\texttt{[~(~)~]}'' and do not contain words like ``\texttt{[~(~]~)}'' (here square and round brackets are the two different types of brackets).
We denote the problem of determining if an input of length $n$ belongs to the Dyck language of height $k$ and at most $t$ types of brackets by $\dyck_{k,n,t}$.
Obviously, $\dyck_{k,n,1}=\dyck_{k,n}$.

We note that $\dyck_{k,n}$ and $\dyck_{k,n,t}$ for $t>1$ are two substantially different problems regarding classical (deterministic or randomized) calculations.
The former problem allows using a counter to keep the number of currently open brackets and thus be content with the memory size of $O(\log k)$.
In contrast, the latter problem requires keeping all the sequence of currently open brackets in a stack, which may take up to $O(k)$ memory.
While both problems are solvable in linear time, there is an exponential gap in the memory usage.

In this paper we provide a quantum algorithm for $\dyck_{k,n,t}$ with quantum query complexity $O(\sqrt{n}(\log n)^{0.5k})$.
We apply the known technique of solving $\dyck_{k,n}$, and then perform a more complex but slightly faster procedure to check the type-matching of the brackets.

The structure of the paper is the following.
Section \ref{sec:prelims} describes some conventional notions for quantum computation.
Section \ref{sec:algo} provides the main algorithm and the proofs.
The final subsection \ref{sec:algo-compl} contains the discussion on the complexity of the algorithm and on the lower bounds.

\section{Preliminaries}\label{sec:prelims}

\subsection{Definitions}

We use the following formalism throughout the paper.
We assume an input string to consist of brackets of $t$ \emph{types} for some positive integer $t$; each type is represented by a pair of brackets -- an opening and a closing one.
Further, we assume the brackets to be encoded by integers from $1$ to $2t$, where the opening and the closing brackets of $i$-th type correspond to the numbers $2i-1$ and $2i$ respectively.

We define two functions:
\begin{itemize}
\item Function $Type:\{1,\dots,2t\}\to\{1,\dots,t\}$ returns the type of a bracket. \\
      $Type(x)=\lceil x/2\rceil$.
\item Function $Open:\{1,\dots,2t\}\to\{0,1\}$ returns $1$ if the argument is an opening bracket, or $0$ if it is a closing bracket.\\
      $Open(x) = x \mod 2$.
\end{itemize}
\begin{flushleft}
For example, string ``\texttt{[~(~)~]}'' could be encoded as ``$1,3,4,2$''. Then \\
$Type(1)=Type(4)=1$ \quad stand for the square brackets; \\
$Type(2)=Type(3)=2$ \quad -- for the parentheses; \\
$Open(1)=Open(2)=1$ \quad -- for the opening brackets; and \\
$Open(3)=Open(4)=0$ \quad -- for the closing brackets.
\end{flushleft} 

We call a string $S=(s_1,\dots,s_m)$ a \emph{well-balanced} sequence of brackets if one of the following holds:
\begin{enumerate}
\item $S$ is empty;
\item $S$ consists of two well-balanced subsequent substrings, i.e. $S[1,i]$ and $S[i+1,m]$ are both well-balanced for some $i$
    (hereafter we denote by $S[i,j]$ a substring $(s_i,\dots,s_j)$ of a string $S=(s_1,\dots,s_m)$);
\item $S$ is a correctly bracketed well-balanced sequence, i.e.
    \begin{itemize}
    \item $S[2,m-1]$ is a well-balanced sequence,
    \item $Type(s_1)=Type(s_m)$,
    \item $Open(s_1)=1$ and $Open(s_m)=0$.
    \end{itemize}
\end{enumerate}
Obviously, the set of all well-balanced sequences of brackets defines the $\dyck$ language.

We also introduce a metric for the balancedness of a substring.
Let $f$ be a function which returns the difference between the numbers of opening and closing brackets:
$f(S[l,r]) = \#_1(S[l,r])-\#_0(S[l,r]).$ (Here $\#_x(S[l,r])$ denotes the number of symbols $s_j$, for $l\leq j\leq r$, such that $Open(s_j)=x$).
We define a $+k$-substring (resp. $-k$-substring) as a substring whose balance is equal to $k$ (resp. equal to $-k$).
A $\pm k-$substring is a substring whose balance is equal to $k$ in absolute value.

We call a nonempty substring $S[l,r]$ \emph{minimal} if it does not contain a nonempty substring $S[l',r']$ such that $(l,r)\neq (l',r')$ and $f(S[l',r'])=f(S[l,r])$.
We call a nonempty substring $S[l,r]$ \emph{prefix-minimal} if it does not start with $S[l,r']$ such that $r'< r$ and $f(S[l,r'])=f(S[l,r])$.
We define the \emph{height} of a substring $S[l,r]$ as $h(S[l,r])=\max_{i\in\{l,\ldots,r\}}f(S[l,i])$.

For example, string $S=``\texttt{[~]~(~)}''$ is well-balanced, because it consists of two well-balanced substrings ``\texttt{[~]}'' and ``\texttt{(~)}'', which in turn both are correctly embraced empty strings.
Its substring $S[1,2]=``\texttt{[~]}''$ is both minimal and prefix-minimal, whereas its substring $S[2,4]=``\texttt{]~(~)}''$ is neither minimal nor prefix-minimal (since $f(S[2,2])=f(S[2,4])=-1$).

Finally, we define the problem $\dyck_{k,n,t}(S)$.
Function $\dyck_{k,n,t}$ accepts $S=(s_1,\dots,s_n)$ as an input and
\begin{itemize}
\item returns $1$ if $S$ is a well-balanced sequence of brackets with at most $t$ types of brackets and with $h(S)\leq k$;
\item returns $0$ otherwise.
\end{itemize}

\subsection{Computational Model}
To evaluate the complexity of a quantum algorithm, we use the standard form of the quantum query model.
It is a generalization of the decision tree model of classical computation that is commonly used to lower bound the amount of time required for a computation.

Let $f:D\rightarrow \{0,1\}$, for some $D\subseteq \{0,1\}^n$, be an $n$-argument binary function we wish to compute.
We have an oracle access to the input $x$ --- it is implemented by a specific unitary transformation usually defined as
\mbox{$\ket{i}\ket{z}\ket{w}\rightarrow \ket{i}\ket{z \oplus x_i}\ket{w}$},
where the $\ket{i}$ register indicates the index of the variable we are querying,
$\ket{z}$ is the output register, and $\ket{w}$ is some auxiliary work-space.
An algorithm in the query model consists of alternating applications of arbitrary unitaries independent of the input and the query unitary, and a measurement in the end.
The smallest number of queries for an algorithm that outputs $f(x)$ with probability $\geq \frac{2}{3}$ on all $x$ is called the quantum query complexity of the function $f$ and is denoted by $Q(f)$.
Throughout this paper, by the running time of an algorithm we mean a number of queries to oracle.

In particular, in this paper we assume the oracle to process queries $Type$ and $Open$ in constant time.

More information on quantum computation and query model can be found in \cite{nc2010,a2017,aazksw2019part1}.

To distinguish ordinary deterministic and randomized complexities from the quantum complexity, they are traditionally called by one term \emph{classical complexity}.
    
\section{Quantum Algorithm}\label{sec:algo}

Before introducing the algorithm for solving $\dyck_{k,n,t}(S)$, we mention the following result from \cite{abikkpssv2020}, which will be used as important subroutine.

\begin{lemma}[\cite{abikkpssv2020}, Theorem 3]\label{lm:dyck1}
There exists a quantum algorithm that solves $\dyck_{k,n,1}$ in time $O(\sqrt{n}(\log n)^{0.5k})$. The algorithm has two-side bounded error probability $\varepsilon<0.5$.
\end{lemma}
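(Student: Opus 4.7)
The plan is to design a recursive quantum algorithm $A_k$ and analyse it by induction on the depth parameter $k$. A word $S$ of length $n$ belongs to $\dyck_{k,n,1}$ iff (i) $f(S)=0$, (ii) every prefix has nonnegative balance, and (iii) every prefix has balance at most $k$. Equivalently, $S$ is Dyck-$k$ iff it is balanced, contains no prefix of balance $-1$ (unmatched closer), and contains no $+1$-substring of height $k+1$ (nesting too deep). Condition (i) is handled in $O(\sqrt{n})$ queries by approximate quantum counting, and the ``unmatched closer'' part of (ii)--(iii) is a Grover search for the first violating prefix, so the real work is to certify the absence of a depth-$(k{+}1)$ occurrence.

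First, I would set up the induction hypothesis: assume $A_{k-1}$ solves $\dyck_{k-1,m,1}$ with query complexity $T(m,k-1)$ and bounded error. For the inductive step, I would exploit the combinatorial fact that a substring reaches depth $k+1$ iff it contains a minimal $+1$-substring whose interior itself reaches depth $k$. Thus $A_k$ performs a Grover-type search over the $\Theta(\log n)$ dyadic block boundaries of $S$ to locate a block whose interior, after stripping a prefix-minimal $+1$-envelope, violates $\dyck_{k-1}$; the inner tester invokes $A_{k-1}$ on an interval of length at most $n/2$, together with an $O(\sqrt{n})$ quantum check that the envelope is indeed prefix-minimal.

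Second, a careful accounting of this recursion yields
\[
T(n,k) \;=\; O\!\bigl(\sqrt{\log n}\bigr)\cdot T(n/2,\,k-1) \;+\; O(\sqrt{n}),
\]
where the $\sqrt{\log n}$ factor comes from Grover's search over the $\Theta(\log n)$ dyadic boundaries at each level. Unrolling this recurrence $k$ times gives $T(n,k)=O(\sqrt{n}\,(\log n)^{0.5k})$, matching the claimed bound. Constant success probability is preserved by using an amplitude-amplification variant tailored to bounded-error oracles, which absorbs the subroutine error without an extra logarithmic blow-up at each level.

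The main obstacle is the error-budget management: a naive amplification of each recursive call to polynomially small failure probability would cost an additional $\log n$ factor per level, degrading the complexity to $O(\sqrt{n}(\log n)^{k})$ and destroying the $0.5k$ exponent. The technical heart of the proof is therefore to invoke a bounded-error Grover/amplitude-amplification routine that tolerates imperfect predicates with only an $O(\sqrt{\log n})$ overhead per level, so that the $k$ levels contribute $(\log n)^{0.5k}$ and not $(\log n)^{k}$. A secondary difficulty is the combinatorial lemma that detection of depth $k+1$ reduces cleanly to detection of a prefix-minimal $+1$-envelope whose interior is a Dyck-$(k-1)$ violator, together with the bookkeeping needed to handle violations that straddle the boundary between two dyadic halves.
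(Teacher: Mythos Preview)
The paper does not prove this statement; it is quoted as Theorem~3 of \cite{abikkpssv2020} and used as a black-box subroutine, so there is no proof here to compare your attempt against. That said, Section~\ref{sec:w-balance} reuses enough of the machinery from \cite{abikkpssv2020} to expose the actual structure of that algorithm, and it differs from your sketch. The original algorithm does not recurse on $\dyck_{k-1}$-instances of halved length; it recursively searches for \emph{minimal $\pm v$-substrings}, decreasing the parameter $v$ while keeping the ambient interval of length up to $n$. The $\sqrt{\log n}$ factor per level comes from a Grover search over the $O(\log n)$ dyadic \emph{length scales} $d\in\{1,2,4,\dots,2^{\lceil\log_2 n\rceil}\}$, together with amplitude amplification over a uniformly random start position (cf.\ Algorithm~\ref{alg:fixLen} and the quoted bound $O(\sqrt{r-l}(\log(r-l))^{0.5(v-2)})$ for the leftmost/rightmost $\pm v$-substring subroutine).

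Your outline has a genuine gap. The recurrence $T(n,k)=O(\sqrt{\log n})\cdot T(n/2,k-1)+O(\sqrt{n})$ is unjustified: nothing in your reduction forces the inner call to land on an interval of length $n/2$. Stripping a ``prefix-minimal $+1$-envelope'' leaves an interior of length up to $n-2$, not $n/2$, so the halving never happens and the recursion does not terminate with the stated bound. Separately, with the definitions used in this paper a \emph{minimal} $+1$-substring is necessarily a single opening bracket (any longer $+1$-substring contains an opening bracket, which is itself a $+1$-substring), so your combinatorial claim ``a substring reaches depth $k+1$ iff it contains a minimal $+1$-substring whose interior itself reaches depth $k$'' is vacuous as written. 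The algorithm of \cite{abikkpssv2020} avoids both issues by detecting a $\pm(k+1)$-substring directly, recursing on the balance parameter $v$ rather than on a Dyck instance of smaller length; the dyadic search is over the unknown length $d$ of that witness, not over block boundaries of $S$.
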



The algorithm for solving $\dyck_{k,n,t}(S)$ generally consists of three main steps:

\textbf{Step 1.} Check whether there are at most $t$ types of brackets, and return $0$ if the number of types exceeds $t$.
                 This part is discussed in Section \ref{sec:t-types}.

\textbf{Step 2.} Uniformize $S$ to just one type of brackets by considering a string $Y=(y_1,\dots,y_n)$ where $y_i=Open(s_i)$.
                 Check whether $\dyck_{k,n,1}(Y)=1$ by using the algorithm from Lemma \ref{lm:dyck1}.
                 If this is the case, then $S$ is a well-balanced sequence of brackets with their types ignored.
                 Otherwise, $S$ obviously is not well-balanced and $\dyck_{k,n,t}(S)=0$. This step almost exactly repeats the algorithm from \cite{abikkpssv2020}.

\textbf{Step 3.} Check whether for any substring $S[l,r]$ the following condition holds: 
\noindent
                 If $Y[l,r]$ is a well-balanced sequence of brackets (with their types ignored) of depth $v$ and $Y[l+1,r-1]$ is a well-balanced sequence of brackets of depth $v-1$, then
                 (1) $Type(s_{l})=Type(s_{r})$; and
                 (2) $S[l+1,r-1]$ is a well-balanced sequence of brackets.

Step 3 should be considered as the main contribution of the paper, and we describe it in detail in Section \ref{sec:w-balance}.
By the definition of the problem, if $S$ passes all three checks, then $\dyck_{k,n,t}(S)=1$.
The complexity of the problem is evaluated in Section \ref{sec:algo-compl}

\subsection{The Procedure for Step 1}\label{sec:t-types}
Recall that by the assumption, all the brackets are encoded by integers from $1$ to $2t$.
Hence it only remains to check whether $S$ contains a bracket with code $c>2t$.
This problem obviously can be solved by Grover's algorithm \cite{g96,bbht98} for finding an argument $j$ (if any) such that $g(j)=1$, for an arbitrary function $g:\{1,\dots,n\}\to\{0,1\}$ implemented as a quantum oracle.
Grover's algorithm runs in time $O(\sqrt{n})$ and has error probability at most $0.5$.

The assumption on encoding of brackets could be relaxed by allowing to use any distinct integer for each kind of bracket.
Then the problem becomes more complex: determine whether a sequence $s_1,\ldots,s_n$ contains at most $2t$ distinct integers.
The upper bound for its query complexity is $O(\sqrt{n}t\log{t})$.
We refer to Section \ref{sec:maxsearch} for more details.


\subsection{The Procedure for Step 3}\label{sec:w-balance}
Assume that any $0$-substring $S[l',r']$ with $h(S[l',r]')\leq v-1$ is known to be a well-balanced sequence of brackets.
In this section we present a procedure that checks whether, under this assumption, any $0$-substring $S[l,r]$ with $h(S[l,r])=v$ is a well-balanced sequence of brackets.

We wish to implement a function $\textsc{CheckSubstr}(S,v)$ which returns 
\begin{itemize}
    \item $\true$ if there exists a ``wrong'' (not well-balanced) sequence $S[l,r]$ such that \sloppy{$h(S[l,r])=v$};
    \item $\false$ otherwise.
\end{itemize}

If we had $\textsc{CheckSubstr}(S,v)$ implemented, then we could invoke it for each $v\in\{1,\dots,k\}$.
In case of all-$\false$ output, the function should return $\false$ (``no wrong sequences''), otherwise $\true$ (``found a wrong sequence for at least one height $v\in\{1,\dots,k\}$'').

We propose the following implementation of $\textsc{CheckSubstr}(S,v)$.

\subsubsection{The case $v=1$.}
We start with considering the case $v=1$. Let a function $g^1 :\{1,\dots,n-1\}\to\{0,1\}$ be such that $g^1(j)=1$ iff $Open(s_j)=1$, $Open(s_{j+1})=0$, and $Type(s_j)\neq Type(s_{j+1})$. In other words, the function indicates sequentially open and close brackets of different types.

We use Grover's algorithm to search for an argument $j\in\{1,\dots,n\}$ such that $g^1(j)=1$.
Hereafter we call this subroutine $\textsc{Grover}(g^1,1,n)$, where $g^1$ is the function run by a quantum oracle in constant time, and $1\ldots n$ defines an interval to search in.
If $\textsc{Grover}(g^1,1,n)$ finds such index $j$, then $\textsc{CheckSubstr}(S,1)$ returns $\true$, otherwise $\false$.

Note that due to the complexity of Grover's algorithm, the query complexity of $\textsc{Grover}(g^1,1,n)$  is $O(\sqrt{n})$, with the error probability at most $0.5$.

\subsubsection{The case $v>1$.}
This step allows assuming any $0$-substring $S[l',r']$ with $h(S[l',r'])=v-1$ to be a well-balanced sequence of brackets. 
Under this assumption, we show that the next property holds:

\begin{lemma}
If for an input string $S$, any $0$-substring $S[l',r']$ with $h(S[l',r'])=v-1$ is a well-balanced sequence of brackets, then any prefix-minimal $0$-substring $S[l,r]$ with $h(S[l,r])=v$ is such that $S[l+1,r-1]$ is either empty or a well-balanced sequence of brackets.

\end{lemma}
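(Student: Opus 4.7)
The plan is to use prefix-minimality to pin down the orientation of the two endpoints of $S[l,r]$, and then shift indices by one on each side so that the interior $S[l+1,r-1]$ becomes a $0$-substring of height $v-1$; the hypothesis of the lemma then finishes the proof.

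First I would argue that $s_l$ is an opening bracket and $s_r$ a closing one. Prefix-minimality together with $f(S[l,r])=0$ gives $f(S[l,r'])\neq 0$ for every $l\le r'<r$. Since $f$ changes by $\pm 1$ per step and returns to $0$ at position $r$, its sign on $[l,r-1]$ is constant. If $s_l$ were closing, that sign would be negative, so $f(S[l,r-1])\le -1$, ruling out $f(S[l,r])=0$. Hence $s_l$ is opening, $f(S[l,r'])\ge 1$ throughout $[l,r-1]$, and $s_r$ is closing so that the balance drops from $1$ to $0$ at the last step. Removing the two endpoint contributions then gives $f(S[l+1,r-1])=0-(+1)-(-1)=0$, so the interior is a $0$-substring.

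Next, I would determine the height of $S[l+1,r-1]$. The identity $f(S[l+1,i])=f(S[l,i])-1$ for $l+1\le i\le r-1$ immediately gives $h(S[l+1,r-1])\le v-1$. When $v\ge 2$, the maximum value $v$ of $f(S[l,\cdot])$ cannot be attained at $i=l$ (where it equals $1$) nor at $i=r$ (where it equals $0$), so it is attained at some $l<i^*<r$ and the inequality is tight: $h(S[l+1,r-1])=v-1$. The hypothesis of the lemma, applied to this $0$-substring, declares it well-balanced, as required.

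The remaining case $v=1$ is the one subtle point, since there the assumption of the lemma is vacuous. Here $h(S[l,r])=1$ combined with $f(S[l,r'])\ge 1$ on $[l,r-1]$ forces $f(S[l,i])=1$ for every $l\le i\le r-1$, hence $f(S[l+1,i])=0$ throughout $[l+1,r-1]$; since $f$ changes by $\pm 1$ per step, this is possible only when $S[l+1,r-1]$ is empty, giving exactly the alternative stated in the lemma. I expect this to be the main obstacle: the ``or empty'' clause is not cosmetic but is needed precisely to cover the base case, and the whole proof is driven by the interaction between prefix-minimality, the height bound, and the parity of steps of $f$.
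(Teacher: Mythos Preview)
Your approach is exactly the paper's: use prefix-minimality to conclude that $s_l$ opens and $s_r$ closes, deduce that $S[l+1,r-1]$ is a $0$-substring of height $v-1$, and invoke the hypothesis. The paper's own proof is a three-line sketch that simply asserts these facts without the detailed justification you supply.

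There is, however, one slip. Your argument that $s_l$ cannot be closing reads: ``that sign would be negative, so $f(S[l,r-1])\le -1$, ruling out $f(S[l,r])=0$.'' This does not rule it out: $f(S[l,r-1])=-1$ together with $s_r$ opening yields $f(S[l,r])=0$ (the two-symbol string ``\texttt{)(}'' is already a prefix-minimal $0$-substring with $s_l$ closing). The contradiction you actually want comes from the height hypothesis: if all prefix sums $f(S[l,i])$ for $l\le i\le r-1$ are negative and $f(S[l,r])=0$, then $h(S[l,r])=0$, contradicting $h(S[l,r])=v\ge 1$. Once you patch this, everything afterwards is correct, and your separate treatment of the case $v=1$ (forcing the interior to be empty) is in fact more careful than the paper, which writes $h(S[l+1,r-1])=v-1$ without singling out that degenerate case.
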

\begin{proof}
According to the definition of a prefix-minimal $0$-substring, we claim that $S[l,r]$ does not contain any shorter prefix $0$-substring.
In particular, it means that $Open(s_l)=1$ and $Open(s_r)=0$. Therefore, $h(S[l+1,r-1])=v-1$, and $S[l+1,r-1]$ is a $0$-substring. Due to the claim of the lemma, $S[l+1,r-1]$ is a well-balanced sequence of brackets.
\Endproof
\end{proof}

Therefore, to complete checking whether the $0$-substring $S[l,r]$ with $h(S[l,r])=v$ is a well-balanced sequence of brackets, it only remains to check that $Type(s_l)=Type(s_r)$.

Let us present a subroutine that searches for a $0$-substring $S[l,r]$ with $h(S[l,r])=v$ such that $Type(s_l)\neq Type(s_r)$. If this subroutine finds nothing, it means that any $0$-substring $S[l,r]$ with $h(S[l,r])=v$ is well-balanced.

We use the following property of prefix-minimal $0$-substrings:

\begin{lemma}
For any prefix-minimal $0$-substring $S[l,r]$ with $h(S[l,r])=v$, there exist   indices $r'$ and $l'$ such that
\begin{itemize}
    \item $l\leq r'<l' \leq r$,
    \item $S[l,r']$ is a $+v$-substring,
    \item $S[l',r]$ is a $-v$-substring, and
    \item there are no $\pm v$-substrings contained in $S[r'+1,l'-1]$.
\end{itemize} 
\end{lemma}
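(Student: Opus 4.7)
The plan is to track the running balance function $f(S[l,i])$ as $i$ ranges over $\{l,\ldots,r\}$, and pick $r'$ and $l'$ as the first and last times this balance attains the maximum value $v$.

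First I would extract two consequences of prefix-minimality. Since $S[l,r]$ is a $0$-substring and contains no shorter prefix $0$-substring, $f(S[l,i]) \neq 0$ for $l \leq i \leq r-1$. Because $f$ changes by $\pm 1$ at each step and $h(S[l,r])=v>0$, it must in fact be strictly positive on that range; in particular $Open(s_l)=1$ and $f(S[l,l])=1$. Combining with the definition of height, I get the uniform bound $1 \leq f(S[l,i]) \leq v$ for every $l \leq i \leq r-1$.

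Next I would define $r' = \min\{i : f(S[l,i])=v\}$ and $l'-1 = \max\{i : f(S[l,i])=v\}$. Both are well-defined because $h(S[l,r])=v$, and both lie in $\{l,\ldots,r-1\}$ since $f(S[l,r])=0$. This immediately gives $l \leq r' \leq l'-1 < l' \leq r$, that $S[l,r']$ is a $+v$-substring, and, via $f(S[l,r])=f(S[l,l'-1])+f(S[l',r])=0$, that $S[l',r]$ is a $-v$-substring.

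The remaining claim is that no substring $S[a,b]$ with $r'+1 \leq a \leq b \leq l'-1$ is a $\pm v$-substring. For any such $a,b$ I would write
\[
f(S[a,b]) = f(S[l,b]) - f(S[l,a-1]),
\]
and observe that both $a-1$ and $b$ lie in $\{r',\ldots,l'-1\}$, so by the bound from the first paragraph each of $f(S[l,b])$ and $f(S[l,a-1])$ belongs to $[1,v]$. Hence $|f(S[a,b])| \leq v-1 < v$, which finishes the proof.

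There is no real obstacle here; the only care required is to justify the two-sided bound $1 \leq f(S[l,i]) \leq v$ on the interior, for which prefix-minimality (ruling out zero-crossings) and the height hypothesis (ruling out overshoot of $v$) are each used exactly once.
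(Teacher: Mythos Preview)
Your argument is correct and considerably cleaner than the paper's. The paper argues existence of $r'$ and $l'$ separately (via the height hypothesis and an arithmetic identity), and then handles the fourth bullet by contradiction: it supposes a leftmost $\pm v$-substring $S[l'',r'']$ sits inside $S[r'+1,l'-1]$ and runs a case split on $\sign f(S[l'',r''])$ and on $\sign f(S[r'+1,l''-1])$, deriving in each branch either a violation of the height bound or a premature zero of the prefix balance. Your route instead fixes $r'$ and $l'$ explicitly as the first and last visits of the running balance to level $v$, extracts once and for all the two-sided bound $1\le f(S[l,i])\le v$ on the interior, and then disposes of the fourth bullet in one line via $f(S[a,b])=f(S[l,b])-f(S[l,a-1])\in[-(v-1),v-1]$. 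This avoids all the case analysis; the price is that you must justify the lower bound $f(S[l,i])\ge 1$, which you do correctly from prefix-minimality and the $\pm 1$ step property. One tiny remark: it is worth noting that when $r'=l'-1$ the interval $[r'+1,l'-1]$ is empty and the fourth bullet is vacuous, but your argument covers this case silently anyway.
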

\begin{proof}
Assume that there is no such index $r'\in\{l,\dots,r-1\}$ that $S[l,r']$ is a $+v$-substring. Then we consider the index $j=\argmax_{j\in\{l+1,\dots,r\}}f(S[l,j])$ and note that $h(S[l,r])=v$ implies $f(S[l,j])=v$, which contradicts the assumption.
We conclude the that the desired index $r'$ exists.

Now assume that there is no such index $l'\in\{r'+1,\dots,r\}$ that $S[l',r]$ is a $-v$-substring. Recall that by the definition of a $0$-substring, $f(S[l,r])=0$. At the same time, $f(S[l,r])=f(S[l,r'])+f(S[r'+1,r])$ and $f(S[l,r'])=v$. Therefore, $f(S[r'+1,r])=f(S[l,r])-f(S[l,r'])=0-v=-v$, which contradicts the assumption. We conclude that both desired indices $r'$ and $l'$ exist.

Finally, assume sequence $S[r'+1,l'-1]$ to contain a $\pm v$-substring. Then we consider the leftmost $\pm v$-substring $S[l'',r'']$, where $r'< l''\leq r''< l'$.

If $S[l'',r'']$ is a $+v$-substring, i.e. $f(S[l'',r''])=v$, then the minimality of $l''$ implies $f(S[r'+1,l''-1])>-v$. Then,
\begin{multline*}
f(S[l,r'']) = f(S[l,r'])+f(S[r'+1,l''-1])+f(S[l'',r'']) \\
= v+f(S[r'+1,l''-1])+f(S[l'+1,l''-1]) > v
\end{multline*}
contradicts the fact that $h(S[l,r])=\max_{j\in\{l+1,r\}}f(S[l,j])=v$.

To finish the proof, it remains only to consider (the impossibility of) the case where $S[l'',r'']$ is a $-v$-substring, i.e. $f(S[l'',r''])=-v$. In this case $f(S[r'+1,l''-1])$ can be negative, zero, or positive.
\begin{itemize}

    \item If $f(S[r'+1,l''-1])<0$, then
    \begin{multline*}
    f(S[l,r'']) = f(S[l,r'])+f(S[r'+1,l''-1])+f(S[l'',r'']) \\
    = v+f(S[r'+1,l''-1])-v = f(S[r'+1,l''-1]) < 0.
    \end{multline*}
    Therefore, there exists such index $j$ that $j<r''<r$ and $f(S[l,j])=0$,
    which contradicts the prefix-minimality of the $0$-substring $S[l,r]$.
    
    \item If $f(S[r'+1,l''-1])=0$, then
    
    $f(S[l,r''])=f(S[l,r'])+f(S[r'+1,l''-1])+f(S[l'',r''])=0.$
    
    Therefore, $f(S[l,r''])=0$ where $r''<r$, which contradicts the prefix-minimality of the $0$-substring $S[l,r]$.
    \item If $f(S[r'+1,l''-1])>0$, then
    
    $f(S[l,l''-1])=f(S[l,r'])+f(S[r'+1,l''-1])=v+f(S[r'+1,l''-1])>v$
    
    contradicts the fact that $h(S[l,r])=\max_{j\in\{l+1,r\}}f(S[l,j])=v$.
\end{itemize}
\Endproof
\end{proof}

These lemmas allow to formulate the algorithm for searching for a not well-balanced $0$-substring, with its length limited to be at most $d$:

\begin{itemize}
\item[Step 1.] Pick index $b$ uniformly at random in $\{1,\dots, n\}$.
\item[Step 2.] Search for the leftmost $\pm v$-substring with length at most $d$, in \sloppy{$S[b,\min(n,b+d-1)]$}. If such substring $S[i_r,j_r]$ was found, proceed to Step 3. Otherwise proceed to Step 4.
\item[Step 3.] Search for the rightmost $\pm v$-substring with length at most $d$ in $S[\max(i_r-d,1),i_r-1]$. If such substring $S[i_l,j_l]$ was found, proceed to Step 6. Otherwise stop and return $\false$.
\item[Step 4.] Search for the rightmost $\pm v$-substring with length at most $d$ in $S[\max(b-d+1,1),b]$. If such substring $S[i_l,j_l]$ was found, proceed to Step 5. Otherwise stop and return $\false$.
\item[Step 5.] Search for the leftmost $\pm v$-substring with length at most $d$ in $S[j_l+1,\min(n,j_l+d)]$. If such substring $S[i_l,j_l]$ was found, proceed to Step 6. Otherwise stop and return $\false$.
\item[Step 6.] If $f(S[i_l,j_l])>0$, $f(S[i_r,j_r])<0$ and $Type(i_l)\neq Type(j_r)$, then return the resulting substring $S[i_l,j_r]$. Otherwise stop and return $\false$.
 \end{itemize}

To search for the rightmost $\pm v$-substring or for the leftmost $\pm v$-substring of length at most $d$ in a segment, we use a subroutine from \cite{abikkpssv2020} with the following property:
\begin{lemma}[\cite{abikkpssv2020}, Property 2]
There is a quantum algorithm for searching for the leftmost or for the rightmost $\pm v$-substring of length at most $d$, in a substring $S[l,r]$. The query complexity of the algorithm is $O(\sqrt{r-l}(\log (r-l))^{0.5(v-2)})$. It returns $(i,j,\sigma)$ such that $S[i,j]$ is a $\pm v$-substring and $\sign(f(S[i,j]))=\sigma$. It returns $\false$ if such substring does not exist.
\end{lemma}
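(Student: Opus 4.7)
The plan is to prove the lemma by induction on $v$, with the base case $v=2$ handled by Grover's search and the inductive step reducing a level-$v$ search to a level-$(v-2)$ search. For the base case, using quantum oracle access to the prefix-balance function $p_k = f(S[l,k])$ (computable in constantly many queries per position via $Open$), a $\pm 2$-substring corresponds to a pair of positions with prefix balances differing by exactly $\pm 2$ and indices within $d$ of each other. A suitable combination of Grover search and quantum minimum-finding locates the leftmost (resp.\ rightmost) such pair in $O(\sqrt{r-l})$ queries, matching the budget $O(\sqrt{r-l}(\log(r-l))^{0.5(2-2)}) = O(\sqrt{r-l})$.

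For the inductive step, the key structural observation is that every $\pm v$-substring contains a strictly shorter $\pm(v-2)$-substring inside it: the prefix balance of a $\pm v$-substring moves in unit steps from $0$ to $\pm v$, so by the intermediate value property it visits $\pm 1$ at some first position and $\pm(v-1)$ at some last position, and the portion between these two positions has balance $\pm(v-2)$ with the same sign. Conversely, every $\pm(v-2)$-substring of length at most $d$ is a candidate ``inner witness'' for an enclosing $\pm v$-substring, which can be reconstructed by extending on either side within a length-$d$ window.

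Algorithmically, I would partition $S[l,r]$ into contiguous blocks of a carefully chosen size $b$, apply Grover's search over the blocks to locate the leftmost (resp.\ rightmost) block containing an inner $\pm(v-2)$-witness, invoke the level-$(v-2)$ algorithm inside that block, and finally perform a short search in the length-$d$ neighborhood of the witness to identify the actual endpoints of the enclosing $\pm v$-substring (checking the length-$d$ and sign conditions explicitly). Amplitude amplification is used to boost the inner success probability to a level sufficient for safe nesting inside an outer Grover loop.

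The main obstacle is choosing $b$ and the amplification depth so that the recurrence, after accounting for the standard logarithmic error-amplification overhead per recursion level, telescopes to $O(\sqrt{r-l}(\log(r-l))^{0.5(v-2)})$; in particular, the extra $\sqrt{\log(r-l)}$ factor per two levels of $v$ must be produced exactly by the combined blocking and amplification overhead. Block-boundary effects --- $\pm v$-substrings whose inner $\pm(v-2)$-witness straddles two blocks --- are handled by additionally recursing on overlapping block pairs and returning the extremal candidate. These parameter choices and the associated bookkeeping are precisely what is carried out in the proof of Property~2 in \cite{abikkpssv2020}, on which the present algorithm relies.
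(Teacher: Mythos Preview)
The paper does not prove this lemma; it is quoted from \cite{abikkpssv2020} (their Property~2) and used as a black box. There is therefore no in-paper proof to compare against beyond the citation itself.

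That said, your sketch has two concrete problems worth flagging. First, the parenthetical claim that the prefix balance $p_k=f(S[l,k])$ is ``computable in constantly many queries per position via $Open$'' is false: $f(S[l,k])$ is a sum of $k-l+1$ values of $Open$ and costs $\Theta(k-l)$ queries to evaluate. The inability to compute prefix sums cheaply is precisely what makes this problem nontrivial in the query model. The correct base case for $v=2$ is that a \emph{minimal} $\pm2$-substring has length exactly two (two consecutive opening brackets, or two consecutive closing brackets), which is a two-query predicate per position and hence amenable to Grover in $O(\sqrt{r-l})$; no prefix sums are needed.

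Second, your cost accounting does not match the target bound. You recurse from level $v$ to level $v-2$ and budget an overhead of $\sqrt{\log(r-l)}$ per such step; that telescopes to $(\log(r-l))^{0.25(v-2)}$, not $(\log(r-l))^{0.5(v-2)}$. To land on the stated exponent you need either a full $\log(r-l)$ overhead per two levels, or --- as the construction in \cite{abikkpssv2020} actually proceeds --- a one-level recursion $v\to v-1$ with a $\sqrt{\log(r-l)}$ factor per step, arising from a Grover search over $O(\log(r-l))$ dyadic length scales combined with error reduction for the nested bounded-error oracle.
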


Hereafter we call subroutines for the leftmost and for the rightmost $\pm v$-substring respectively $\textsc{Leftmost}(S,l,r,v,d)$ and  $\textsc{Rightmost}(S,l,r,v,d)$. They return a triple $(i,j,\sigma)$, such that $S[i,j]$ is the resulting substring and $\sigma=sign(f(S[i,j]))$. They return $\false$ if there are no such $\pm v$-strings.

We formalize the algorithm in the code listing of Algorithm \ref{alg:fixLen}:
\begin{algorithm}
\caption{Search for a not well-balanced $0$-substring $S[l,r]$ with height $h(S[l,r])=v$ and length $r-l+1 \leq d$.}\label{alg:fixLen}
\begin{algorithmic}
\State $\{1,\dots,n\} \xleftarrow{R} b$\Comment{randomly pick $b$} 
\State $u_r=(i_r,j_r,\sigma_r)\gets\textsc{Leftmost}(S,b,\min(n,b+d-1),v,d)$
\If{$u_l\neq \false$}
\State $u_l=(i_l,j_l,\sigma_l)\gets\textsc{Rightmost}(S,\max(i_r-d,1),i_r-1,v,d)$
\Else
\State $u_l=(i_l,j_l,\sigma_l)\gets\textsc{Rightmost}(S,\max(b-d+1,1),b,v,d)$
\If{$u_l\neq \false$}
\State $u_r=(i_r,j_r,\sigma_r)\gets\textsc{Leftmost}(S,j_l+1,(n,j_l+d),v,d)$
\EndIf
\EndIf
\If{$u_l\neq \false$ and $u_r\neq \false$ and $\sigma_l=1$ and $\sigma_r=-1$ and $Type(s_{i_l})\neq Type(s_{j_r})$}
\State \Return $(i_l,j_r)$
\Else
\State \Return $\false$
\EndIf

\end{algorithmic}
\end{algorithm}

Assume that some string $S$ contains a not well-balanced $0$-substring $S[l,r]$ with height $h(S[l,r])=v$ and length $d$. The probability of finding such substring by this algorithm is equal to the probability of picking an index inside the substring, and therefore can be estimated by $\Omega(d/n)$. By applying the Amplitude amplification algorithm \cite{bhmt2002} for the randomized Algorithm \ref{alg:fixLen}, we obtain an algorithm with query complexity $O(\sqrt{\frac{n}{d}}\cdot{\sqrt{d}}(\log d)^{0.5(v-2)})=O(\sqrt{n}(\log d)^{0.5(v-2)})$. 

Next, we search for $d$ among the elements of set $T=\{2^0, 2^1,2^2, \dots, 2^{\lceil\log_2 n\rceil}\}$. This can be done also by using Grover's algorithm. The overall complexity of the algorithm for finding a $0$-substring $S[l,r]$ with height $h(S[l,r])=v$ and arbitrary length is $O(\sqrt{n}(\log n)^{0.5(v-1)})$. We note that Grover's algorithm relies on an oracle with a two-side bounded error, whereas it is hardly justified to assume a quantum oracle which directly handles $T$ to markup the appropriate lengths. To address this issue, we use the modification of the algorithm presented in \cite{abikkpssv2020,hmw2003} and thus obtain the implementation of $\textsc{CheckSubstr}(S,v)$.  

Finally, we implement Step 3 in the code listing of Algorithm \ref{alg:step3}.

\begin{algorithm}
\caption{$\textsc{Step3}(S)$}\label{alg:step3}
\begin{algorithmic}
\State $v\gets 1$
\While{$v\leq k$}
\If{$\textsc{CheckSubstr}(S,v)\neq \false$}
\State \Return $\true$
\EndIf
\State $v\gets v+1$
\EndWhile
\State \Return $\false$
\end{algorithmic}
\end{algorithm}

Then the overall algorithm for the problem $\dyck_{n,k,t}$ can be formalized as in the code listing of Algorithm \ref{alg:dyck-t}.



\begin{algorithm}
\caption{Solving $\dyck_{n,k,t}$}
\label{alg:dyck-t}
\begin{algorithmic}
\If{$\textsc{Step1}(S)=1$ \textbf{and} $\dyck_{n,k}(Y)=1$ \textbf{and} $\textsc{Step3}(S)=\false$}
\State \Return $1$
\Else
\State \Return $0$
\EndIf

\end{algorithmic}
\end{algorithm}

\subsection{Query complexity}\label{sec:algo-compl}
In this section we estimate the query complexity of $\dyck_{k,n,t}$ and discuss properties of Algorithm \ref{alg:dyck-t}.
\begin{theorem}\label{th:upper}
Algorithm \ref{alg:dyck-t} for solving $\dyck_{k,n,t}$, has query complexity $O(\sqrt{n}(\log n)^{0.5 k})$ and a constant two-side bounded error probability $\varepsilon<0.5$. 
\end{theorem}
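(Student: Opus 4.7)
The plan is to separately bound the query complexity of each of the three steps of Algorithm \ref{alg:dyck-t}, show that their sum matches the stated bound, and then argue that a standard amplification of the subroutine success probabilities keeps the overall error below a constant less than $1/2$.

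For the query complexity, I would proceed step by step. Step 1 is a single invocation of Grover's algorithm searching for a symbol with code $>2t$, so it contributes $O(\sqrt{n})$ queries. Step 2 is one call to the algorithm of Lemma~\ref{lm:dyck1} on the ``uniformized'' string $Y$, contributing $O(\sqrt{n}(\log n)^{0.5k})$ queries. Step 3 is the while loop of $\textsc{Step3}(S)$, which invokes $\textsc{CheckSubstr}(S,v)$ for $v=1,\dots,k$. As argued in Section~\ref{sec:w-balance}, one call to $\textsc{CheckSubstr}(S,v)$ for fixed length $d$ runs in $O(\sqrt{n}(\log d)^{0.5(v-2)})$ queries, and combining it with a Grover search over the $O(\log n)$ candidate lengths $d\in T$ yields $O(\sqrt{n}(\log n)^{0.5(v-1)})$ queries per value of $v$. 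Summing over $v=1,\dots,k$ gives a geometric series in $\sqrt{\log n}$, dominated by its last term $O(\sqrt{n}(\log n)^{0.5(k-1)})$, which is within the claimed bound. Adding the three contributions yields the stated $O(\sqrt{n}(\log n)^{0.5k})$.

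For correctness, I would first argue that if $S$ passes all three checks then $\dyck_{k,n,t}(S)=1$, and if $\dyck_{k,n,t}(S)=0$ then at least one of the three checks detects a witness. Step 1 certifies the alphabet restriction. Step 2 certifies that $Y=(Open(s_1),\dots,Open(s_n))$ is a bracket-balanced string of height at most $k$, which is exactly the depth/balance condition of $\dyck_{k,n,t}$ ignoring types. For Step 3, I would induct on $v$: assuming all $0$-substrings of height $\le v-1$ are type-balanced, the two lemmas of Section~\ref{sec:w-balance} reduce the check at height $v$ to verifying that the outermost opening and closing brackets of every prefix-minimal $0$-substring of height exactly $v$ match in type, which is precisely what $\textsc{CheckSubstr}(S,v)$ does. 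After the loop reaches $v=k$, every $0$-substring of $S$ of height at most $k$ is well-balanced, so $S\in\dyck_{k,n,t}$.

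The main obstacle is propagating the two-sided error through the $O(k)$ invocations of probabilistic subroutines in Step 3 (and the two probabilistic steps before it) without blowing up the query complexity. The plan is the standard one: replace each subroutine call (Grover in Step 1, the Lemma~\ref{lm:dyck1} algorithm in Step 2, and each $\textsc{CheckSubstr}(S,v)$ in Step 3) by $O(\log k)$ independent repetitions followed by a majority vote, which drives the error of each individual call down to $O(1/k)$ while multiplying its query cost only by a $\log k$ factor, preserving the asymptotic bound (since $\log k$ is absorbed into the $(\log n)^{0.5k}$ factor for $k\le n$). A union bound over the at most $k+2$ calls then gives overall error bounded by a constant $\varepsilon<1/2$, completing the proof.
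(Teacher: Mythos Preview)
Your proposal is correct and follows essentially the same approach as the paper: the three-step cost decomposition is identical, the geometric sum $\sum_{v=1}^{k}\sqrt{n}(\log n)^{0.5(v-1)}=O(\sqrt{n}(\log n)^{0.5(k-1)})$ matches, and your error-handling via $O(\log k)$-fold repetition with majority and a union bound is exactly the ``repetitive calls'' technique the paper invokes from \cite{abikkpssv2020}. Your additional inductive correctness paragraph is not part of the paper's proof of Theorem~\ref{th:upper} itself (the paper delegates that to the lemmas of Section~\ref{sec:w-balance}), but it is consistent with the surrounding argument and does no harm.
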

\begin{proof}
We start with the query complexity of the algorithm.

The complexity of Step 1 is obviously equal to the one of Grover's algorithm, i.e. to $O(\sqrt{n})$.
Lemma \ref{lm:dyck1} estimates the complexity of Step 2 as $O(\sqrt{n}(\log n)^{0.5k})$.
The complexity of Step 3 can be derived from the code listing of Algorithm \ref{alg:step3}: 

$
O(\sum_{v=1}^k \sqrt{n}(\log n)^{0.5(v-1)})=O(\sqrt{n}(\log n)^{0.5(k-1)}).
$

The overall complexity of Algorithm \ref{alg:dyck-t} is

$
O(\sqrt{n})+O(\sqrt{n}(\log n)^{0.5k})+O(\sqrt{n}(\log n)^{0.5(k-1)})
=O(\sqrt{n}(\log n)^{0.5k}).
$

We continue the proof by considering the error probability of the algorithm. Step 1 has error probability at most $0.5$.
Step 2 has constant error probability $\varepsilon_0<0.5$.
Step 3 has error probability at most $1-(1-\varepsilon_1)^k$ for some constant $\varepsilon_1<0.5$.
As each error probability is constant, we can obtain the desired overall error probability $\varepsilon$ by exploiting the technique from \cite{abikkpssv2020}, i.e. by a series of repetitive calls of the algorithm. 
\Endproof
\end{proof}

We finish our discussion with a couple of lower bounds of the query complexity.
\begin{theorem}
There exists a constant $c_1>0$ such that
$
Q(\dyck_{c_1\ell,n,t})=\Omega(2^{\frac{\ell}{2}}\sqrt{n}).
$
\end{theorem}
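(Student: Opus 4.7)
The plan is to obtain the lower bound by a trivial reduction from the single-type Dyck problem $\dyck_{k,n,1}$ to the multi-type version $\dyck_{k,n,t}$, and then to invoke the known lower bound of Ambainis et al. \cite{abikkpssv2020}.

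First I would observe that for any $t\geq 1$, any input to $\dyck_{k,n,1}$ (a string using only brackets of type $1$, i.e., symbols from $\{1,2\}$ under the encoding of Section \ref{sec:prelims}) is also a valid input to $\dyck_{k,n,t}$, and that the two functions agree on such inputs: well-balancedness and the height constraint $h(S)\leq k$ depend only on the $Open$ values, which are identical in both encodings. Consequently any quantum algorithm for $\dyck_{k,n,t}$ yields a quantum algorithm for $\dyck_{k,n,1}$ with the same query complexity, giving
\[
Q(\dyck_{k,n,t}) \;\geq\; Q(\dyck_{k,n,1}).
\]

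Next I would invoke the lower bound from \cite{abikkpssv2020} quoted in the introduction: there exists an absolute constant $c>1$ such that for all $k\leq \log n$,
\[
Q(\dyck_{k,n,1}) \;=\; \Omega\!\left(c^{k}\sqrt{n}\right).
\]
Setting $c_1 := \tfrac{1}{2\log_2 c}$ and $k := c_1\ell$ (for $\ell \leq (\log n)/c_1$ so that the hypothesis $k\leq \log n$ holds) gives $c^{c_1\ell} = 2^{\ell/2}$, whence
\[
Q(\dyck_{c_1\ell,n,t}) \;\geq\; Q(\dyck_{c_1\ell,n,1}) \;=\; \Omega\!\left(2^{\ell/2}\sqrt{n}\right),
\]
which is the claimed bound.

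The proof is essentially a one-line reduction, so there is no serious obstacle; the only thing to verify carefully is that the encoding of a single-type input as a multi-type input does not change the output (which holds since all the balancedness and height clauses used in the definition of $\dyck_{k,n,t}$ reduce to the single-type notion when only one type appears), and that the constant $c_1$ can be chosen to be independent of $\ell$, $n$, and $t$, which it is since $c$ from \cite{abikkpssv2020} is absolute.
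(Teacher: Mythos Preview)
Your proposal is correct and follows exactly the paper's own argument: reduce $\dyck_{k,n,t}$ to $\dyck_{k,n,1}$ by restricting to single-type inputs, then invoke the lower bound of \cite{abikkpssv2020}. Your write-up is in fact more detailed than the paper's two-line proof, making explicit how to extract the constant $c_1$ from the constant $c$ in the cited bound.
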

\begin{proof}
The similar bound holds for $Q(\dyck_{c_1\ell,n,1})$ \cite[Theorem~6]{abikkpssv2020}. By setting $t=1$ we get that $\dyck_{c_1\ell,n,t}$ is at least as hard as $\dyck_{c_1\ell,n,1}$.
\Endproof
\end{proof}

\begin{theorem}
For any $\gamma>0$, there exists a constant $c_2>0$ such that

$
Q(\dyck_{c_2\log n,n,t})=\Omega(n^{1-\gamma}).
$
\end{theorem}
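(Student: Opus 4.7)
The plan is to reduce from the single-type case, exactly mirroring the proof of the previous theorem. The introduction recalls point (ii) from Ambainis et al.~\cite{abikkpssv2020}: for any $\gamma>0$ there is a constant $c$ such that when $k=c\log n$, the one-type problem satisfies $Q(\dyck_{k,n,1})=\Omega(n^{1-\gamma})$. So first I would invoke that bound, fixing $c_2$ to be the constant $c$ corresponding to the given $\gamma$.

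Next I would observe that the multi-type problem is at least as hard as the single-type problem. This is witnessed by the trivial embedding: every input $S$ for $\dyck_{k,n,1}$, i.e.\ a string over the alphabet $\{1,2\}$ of one opening and one closing bracket, is also a valid input for $\dyck_{k,n,t}$ that happens to use only the first type. An oracle for the multi-type input can answer the functions $Type$ and $Open$ in constant time by definition, so any algorithm solving $\dyck_{k,n,t}$ can be used verbatim to solve $\dyck_{k,n,1}$ with the same number of queries and the same success probability. Hence $Q(\dyck_{k,n,t})\geq Q(\dyck_{k,n,1})$ for every $t\geq 1$.

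Combining these two observations with $k=c_2\log n$ yields
\[
Q(\dyck_{c_2\log n,n,t})\;\geq\;Q(\dyck_{c_2\log n,n,1})\;=\;\Omega(n^{1-\gamma}),
\]
which is the claim. There is no real obstacle here: the bound is an immediate corollary of the known hardness of the one-type case, since adding more bracket types can only make the recognition problem harder (never easier).
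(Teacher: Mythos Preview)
Your proposal is correct and follows exactly the paper's approach: invoke the single-type lower bound from \cite{abikkpssv2020} and observe that $\dyck_{k,n,t}$ is at least as hard as $\dyck_{k,n,1}$ via the trivial identification of single-type inputs with multi-type inputs. The paper's proof is a two-sentence version of what you wrote.
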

\begin{proof}
The similar bound holds for $Q(\dyck_{c_2\log n,n,1})$ that was presented in \cite[Theorem~5]{abikkpssv2020}. By setting $t=1$ we get that $\dyck_{c_2\log n,n,t}$ is at least as hard as $\dyck_{c_2\log n,n,1}$.
\Endproof
\end{proof}

\section{Generalizing Step 1 Algorithm}\label{sec:maxsearch}
The restriction for all the brackets to be encoded by positive integers up to $2t$, is quite significant for the proposed quantum algorithm. In contrast, formulation of a more natural problem could assume arbitrary encoding of different kinds of brackets. For example, a string could consist of brackets like ``\texttt{(~)}'', ``\texttt{[~]}'', ``\texttt{\{~\}}'' in arbitrary encoding like ASCII, UTF-32, etc. Under these circumstances, one still can distinguish the type of a certain bracket; and still can determine whether a certain bracket is opening or closing; but one cannot anymore determine how many different types of brackets occur in the string.

Formally speaking, the fragment ``at most $t$ types of brackets'' from our definition of $\dyck_{k,n,t}$ means \sloppy\mbox{$\left|\{Type(s_i): 1\leq i\leq n\}\right|\leq t$} rather than \sloppy\mbox{$\max_{i\in\{1,\ldots,n\}} Type(s_i) \leq t$} which was assumed throughout the paper.
Hereafter we refer to such a more general formulation of the problem as $\dyck'_{k,n,t}$.
The implementation of Step 1 from Section \ref{sec:t-types} is not suitable for solving $\dyck'_{k,n,t}$, whereas the rest of the algorithm does not depend on whether the codes of the types of brackets are consecutive or not.

We note that in many cases this won't be an issue, as the number of different types of brackets $t$ typically is a small constant like $2$, $3$ or $4$. However the following problem could be of certain interest even if not connected with $\dyck'_{k,n,t}$:

\begin{problem}\label{problem:arbitrary-encoding}
Given a string $S$ of length $n$, and a limitation parameter $t$, determine whether
\sloppy\mbox{$\left|\left\{Type\left(s_i\right): 1\leq i\leq n\right\}\right|\leq t$}.
\end{problem}

Note that Step 1 from Section \ref{sec:t-types} obviously reduces to Problem \ref{problem:arbitrary-encoding}.

In the rest of this section we propose an algorithm for solving this problem and thus generalize our solution to $\dyck'_{k,n,t}$, i.e. to the case with arbitrarily encoded sequences of brackets.

Let $T$ be an integer such that $2T$ is an upper bound for the code of a bracket in the input string (e.g. the size of the input alphabet). 
Let $Type:\{1,\dots,2T\}\to\{1,\dots,t\}$ be a function that returns the type of a bracket.
Let $q:\{1,\dots,n\}\times\{1,\dots,2T+1\}\to\{0,\dots,2T\}$ be a function which returns
\begin{itemize}
\item $q(i,r) = Type(i)$ \qquad if $Type(i)<r$; or
\item $q(i,r) = 0$ \qquad otherwise.
\end{itemize}

We consider the following procedure: 

\begin{itemize}
    \item[Step $1$] Compute $y_1 = max\{q(i,2T+1), 1\leq i\leq n\}$ by using D\"{u}rr's and H{\o}yer's algorithm for finding the maximum \cite{dh96}. Thus we compute the maximum among all the codes of brackets. 
    \item[Step $2$] Compute $y_2 = max\{q(i,y_1), 1\leq i\leq n\}$ in the same manner, the second-biggest code among all the codes of brackets.
    \item[\ldots] \ldots
    \item[Step $j$] Compute $y_j = max\{q(i,y_{j-1}), 1\leq i\leq n\}$.
\end{itemize}

This procedure lasts until $y_j=0$, which means that there are no bracket codes less than $y_{j-1}$ and that there are exactly $j-1$ different types of brackets contained in string $S$. Then condition $j-1 \leq t$ indicates whether Step 1 was executed correctly.
We formalize this idea in the code listing of Algorithm \ref{alg:step1}, assuming subroutine $\textsc{QMax}(q(*,y_1), 1,n)$ to implement the quantum algorithm for maximum search \cite{dh96}.

\vspace{-0.5cm}
\begin{algorithm}
\caption{Step 1 for solving $\dyck'_{k,n,t}$}\label{alg:step1}
\begin{algorithmic}
\State $j\gets 1$
\State $y_1\gets\textsc{QMax}(q(*,2T+1), 1,n)$
\While{$y_j\neq 0$}
\If{$j>t$}
\State \Return $0$
\EndIf
\State $j\gets j+1$
\State $y_{j}\gets\textsc{QMax}(q(*,y_{j-1}), 1,n)$
\EndWhile
\State \Return $1$
\end{algorithmic}
\end{algorithm}

\vspace{-0.5cm}
\begin{lemma}\label{lm:step1}
The query complexity of Algorithm \ref{alg:step1} is $O(t\sqrt{n}\log t)$, and the error probability is some constant $\varepsilon<1$.
\end{lemma}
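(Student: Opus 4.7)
The plan is to bound both resources separately and then combine them. First I would argue correctness of the loop structure: at iteration $j$, the call $\textsc{QMax}(q(*,y_{j-1}),1,n)$ returns the largest code strictly smaller than $y_{j-1}$ (or $0$ if none exists), because $q(i,y_{j-1})$ masks away every bracket whose type is $\geq y_{j-1}$. Hence the sequence $y_1>y_2>\cdots$ enumerates the distinct types of brackets appearing in $S$ in decreasing order of code, and the loop terminates as soon as this enumeration is exhausted. Consequently, if $S$ contains at most $t$ distinct types, the loop performs at most $t+1$ iterations before producing $y_{j}=0$; if $S$ contains more than $t$ distinct types, the guard $j>t$ triggers within the first $t+1$ iterations and the algorithm returns $0$.

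Next I would account for the query cost. Each iteration issues exactly one call to $\textsc{QMax}$ on an array of length $n$, which by D\"urr--H{\o}yer's maximum-finding algorithm \cite{dh96} costs $O(\sqrt{n})$ queries with some constant error probability bounded away from $1$. Since there are at most $t+1=O(t)$ iterations, a na\"ive budget would be $O(t\sqrt{n})$; the extra $\log t$ factor in the target bound comes from error amplification, which is the main (if mild) obstacle.

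The key step is to drive the per-call failure probability down enough for a union bound over $O(t)$ calls to leave a constant overall error. I would replace each invocation of $\textsc{QMax}$ by its majority-vote boosted version: run $\textsc{QMax}$ $\Theta(\log t)$ times independently and output the value that appears most often (or, equivalently for maximum search, the maximum of the returned candidates that agrees with the majority of runs). Standard Chernoff bounds reduce the per-call error from a constant to $O(1/t^{2})$ at a multiplicative cost of $O(\log t)$ queries, so the boosted cost of a single iteration is $O(\sqrt{n}\log t)$, and the total cost is $O(t\sqrt{n}\log t)$, matching the claim.

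Finally, by a union bound, the probability that any of the $O(t)$ boosted calls returns an incorrect value is $O(1/t)$, hence at most some constant $\varepsilon<1$ for all $t\geq 1$. Conditioned on all calls succeeding, the algorithm outputs $1$ iff $S$ contains at most $t$ distinct types, which establishes the stated error bound and completes the analysis.
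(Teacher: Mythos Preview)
Your proposal is correct and follows essentially the same approach as the paper: bound each \textsc{QMax} call by $O(\sqrt{n})$ queries, boost each call $\Theta(\log t)$ times to drive its error to $O(1/t^{2})$, and then union-bound over the $O(t)$ iterations to get a constant overall error and total cost $O(t\sqrt{n}\log t)$. The paper's proof is terser---it omits the explicit loop-termination argument and the final union bound---and it additionally invokes Markov's inequality to pass from the \emph{expected} query complexity of D\"urr--H{\o}yer to a worst-case $O(\sqrt{n})$ bound, a small technical point you glossed over but which does not affect the argument.
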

\begin{proof}
The expected query complexity of $\textsc{QMax}(q(*,y_{j-1}), 1,n)$
is $O(\sqrt{n})$ \cite{dh96}. According to Markov's inequality, also the exact query complexity of $\textsc{QMax}(q(*,y_{j-1}), 1,n)$ is $O(\sqrt{n})$.
As the error probability of $\textsc{QMax}$ is some constant, repeating it $2\log_2 t$ times results in the error probability $O(\frac{1}{t^2})$.
\Endproof
\end{proof}

If $t=O(\log n ^{0.5(k-1)})$, then the query complexity of Algorithm \ref{alg:step1} (run at Step 1) won't exceed the complexity of Step 2, and the overall complexity of the algorithm will remain the same.

\begin{theorem}
Algorithm \ref{alg:dyck-t} with Step 1 implemented by Algorithm \ref{alg:step1}, solves $\dyck'_{k,n,t}$.
If $t=O(\log n ^{0.5(k-1)})$, then the query complexity of this solution is $O(\sqrt{n}(\log n)^{0.5k})$, and the two-side bounded error probability is $\varepsilon<0.5$.
\end{theorem}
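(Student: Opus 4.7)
The plan is to reduce the statement to Theorem~\ref{th:upper} and Lemma~\ref{lm:step1}, since the algorithm under discussion differs from Algorithm~\ref{alg:dyck-t} only in the implementation of Step~1. I would argue correctness, query complexity, and error probability separately, in each case reusing the analysis of Steps 2 and 3 already carried out for Theorem~\ref{th:upper}.

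For correctness, my first observation is that Steps 2 and 3 of Algorithm~\ref{alg:dyck-t} interact with the input alphabet only through $Open$ and through equality tests of the form $Type(s_i)=Type(s_j)$; nowhere do they exploit the codes being consecutive integers in $\{1,\dots,2t\}$. Thus it suffices to verify that Algorithm~\ref{alg:step1} correctly decides whether $\left|\{Type(s_i):1\leq i\leq n\}\right|\leq t$. I would establish the invariant that after iteration $j$ of the while loop, the values $y_1>y_2>\cdots>y_j$ are exactly the $j$ largest distinct types occurring in $S$; this follows directly from the fact that $q(\cdot,r)$ masks out every code that is not strictly smaller than $r$, so each call to $\textsc{QMax}$ returns the next-smaller distinct type, or $0$ if none remains. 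The loop terminates the first time no smaller type exists, so the total number of distinct types equals $j-1$, and the early-exit condition $j>t$ fires exactly when this count exceeds $t$.

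For the query complexity I would add the three contributions. Step~1 costs $O(t\sqrt{n}\log t)$ by Lemma~\ref{lm:step1}; Step~2 costs $O(\sqrt{n}(\log n)^{0.5k})$ by Lemma~\ref{lm:dyck1}; and Step~3 costs $O(\sqrt{n}(\log n)^{0.5(k-1)})$ as shown in the proof of Theorem~\ref{th:upper}. Under the hypothesis $t=O((\log n)^{0.5(k-1)})$ one has $\log t=O(\log\log n)$, so the Step~1 contribution becomes $O(\sqrt{n}(\log n)^{0.5(k-1)}\log\log n)$, and since $\log\log n=o((\log n)^{1/2})$ both the Step~1 and Step~3 terms are dominated by Step~2. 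The overall cost is therefore $O(\sqrt{n}(\log n)^{0.5k})$.

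For the error probability, Step~1 has constant error by Lemma~\ref{lm:step1}, Step~2 has constant error by Lemma~\ref{lm:dyck1}, and Step~3 has error at most $1-(1-\varepsilon_1)^k$ as in the proof of Theorem~\ref{th:upper}. I would apply the same amplification technique used there: repeat each step a constant number of times so that its error falls below a chosen threshold, and then union-bound to obtain $\varepsilon<0.5$. The main obstacle I anticipate is the behaviour of Step~3 when $k$ is not treated as a constant: to keep $1-(1-\varepsilon_1)^k<1/2$ one must drive $\varepsilon_1$ down to $O(1/k)$, which costs an extra $O(\log k)$ factor per internal call to $\textsc{CheckSubstr}$. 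This overhead is absorbed by the dominant $(\log n)^{0.5k}$ factor, so the asymptotic bound is preserved; beyond this, the argument is routine.
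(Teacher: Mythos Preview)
Your proposal is correct and follows essentially the same approach as the paper: invoke Lemma~\ref{lm:step1} for Step~1, reuse the Step~2 and Step~3 bounds from Theorem~\ref{th:upper}, and sum. Your treatment is actually more thorough than the paper's, which omits the correctness argument for Algorithm~\ref{alg:step1} entirely and dispatches the error analysis in a single sentence; your explicit invariant for the iterative maximum search and your remark about the $O(\log k)$ amplification overhead in Step~3 are welcome additions.
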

\begin{proof}
According to Lemma \ref{lm:step1}, the query complexity of Step 1 is $O(\sqrt{n}\log n ^{0.5(k-1)}\log\log n)=O(\sqrt{n}\log n ^{0.5k})$. Steps 2 and 3 are the same as in Algorithm \ref{alg:dyck-t}, with complexities  resp. $O(\sqrt{n}\log n ^{0.5k})$ and $O(\sqrt{n}\log n ^{0.5(k-1)})$ proven as for Theorem \ref{th:upper}. Thus the overall query complexity is

$
O(\sqrt{n}\log n ^{0.5k})+O(\sqrt{n}\log n ^{0.5k})+O(\sqrt{n}\log n ^{0.5(k-1)})=O(\sqrt{n}\log n ^{0.5k}).
$

The estimation of the error probability is analogous to the one in the proof of Theorem \ref{th:upper}
\Endproof
\end{proof}

Although we strongly believe that there exists a more efficient quantum algorithm for solving Step 1 of $\dyck'_{k,n,t}$, but for now we limit our considerations with the just proposed iterative maximum search.

{\bf Acknowledgements.} The research is funded by the subsidy allocated to Kazan Federal University for the state assignment in the sphere of scientific activities, project No. 0671-2020-0065.

\bibliographystyle{plain}
\bibliography{tcs}

\end{document}